\def\thm@space@setup{%
  \thm@preskip=\parskip \thm@postskip=0pt
}
\newcolumntype{d}[1]{D{.}{\cdot}{#1}}
\numberwithin{equation}{section}
\theoremstyle{plain}
\newtheorem{theorem}{Theorem}[section]
\newtheorem{proposition}[theorem]{Proposition}
\title{\textbf{
A Short Solution to the Many-Player Silent Duel with Arbitrary Consolation Prize
}}
\author{Steve Alpern%
\thanks{Steve Alpern is at the Warwick Business School,
University of Warwick, Coventry CV4 7AL, UK %
\texttt{ Steve.Alpern@wbs.ac.uk}}%
\and%
J. V. Howard%
\thanks{J. V. Howard is at the London School of Economics,
Houghton Street, London WC2A 2AE, UK %
\texttt{  j.v.howard@lse.ac.uk}}%
}%
\date{1 December, 2017}
\begin{document}
\maketitle%
\thispagestyle{empty}%
\pagestyle{plain}
\begin{abstract}
The classical constant-sum `silent duel' game had two antagonistic marksmen walking towards each other. A more friendly formulation has two equally skilled marksmen approaching targets at which they may silently fire at distances of their own choice. The winner, who gets a unit prize, is the marksman who hits his target at the greatest distance; if both miss, they share the prize (each gets a `consolation prize' of one half). In another formulation, if they both miss they each get zero. More generally we can consider more than two marksmen and an arbitrary consolation prize. This non-constant sum game may be interpreted as a research tournament where the entrant who successfully solves the hardest problem wins the prize. We give the first complete solution to the many-player problem with arbitrary consolation prize: moreover (by taking particular values for the consolation prize), our theorem incorporates various special results in the literature, and our proof is simpler than any of these.
\end{abstract}
\textbf{Keywords:} Silent duels, game theory, tournament theory, contests, league tables.
\newpage
\section{Introduction}
Suppose that a number of equally skilled competitors attempt a task whose level of difficulty can be varied -- for example lifting a heavy barbell (vary the weight), jumping over a bar (vary the height), or proving a mathematical theorem (vary the theorem). Each competitor chooses his own level of difficulty and is allowed only one attempt, not knowing what levels the others are attempting nor whether they succeed or fail. The player who is successful at the highest level wins, or shares the prize if there is a tie. Of course, this protocol does not follow normal weight-lifting or high-jumping competition rules. However, it does give a simple model of a research tournament where entrants compete to find the best solution to a problem posed by a firm -- better solutions being more difficult to find.

Another version of the problem would have a line of marksmen walking towards a set of targets (one for each player). They can choose to fire at any distance, and the attempt will either be a hit or a miss. The shots are silent, so the other players do not know who has fired, nor whether anyone has hit the target. A marksman who hits his target from the greatest distance is the winner, and again the prize is shared if there is a tie. If all fail (everyone misses), we could again say the prize is shared, which keeps it a constant-sum game, or give no prize, or give everyone a small consolation prize. (If all the players start by contributing to the prize pool, sharing the prize when all fail is reasonable.) Giving no consolation prize is the standard assumption for prize competitions (research tournaments).

When there are just two players we can assume the target is the other player, and they walk steadily towards each other. The game then becomes the classic game of timing known as the `silent duel'. The purpose of this note is to give a short and simple solution to this problem in the many-person case and for any level of the consolation prize.
\par
Games of timing were first extensively studied at the RAND Corporation during the period 1948-1952, though some of the memorandums were only declassified later. \cite{Radzik1996}, who provides a detailed history (as well as more advanced analysis), cites David Blackwell together with M. Shiffman, M. A. Girshik, L. S. Shapley, R. Bellman, and I Glicksberg as some of the major researchers at that time. The silent duel is the exemplar of games of timing and is a staple example in the major texts on mathematical game theory (e.g. \cite{Owen1995}, \cite{BasarEtal1982}, and \cite{Garnaev2000}). The area has seen significant development (noisy duels, many bullets, limited time horizon). Most of these extensions remain two person constant-sum games, with the exceptions of \cite{Sakaguchi1978}, \cite{HenigONeill1992}, \cite{BastonGarnaev1995}, and \cite{PresmanSonin2006} (who also allow for players with different marksmanship skills).
\section{The model and solution}
We have $n$ identical players, called marksmen. A pure strategy for each is a firing distance, $x$. We assume the probability of missing is $1$ at maximum distance, and declines strictly and  continuously as the range decreases, becoming zero when the range is zero. So we can measure the firing distance by the probability of missing, $x\in [0,1]$. In a mixed strategy the firing distance will be a random variable $X$ having a probability distribution specified by a (cumulative) distribution function $G(x)$.
\par
We now introduce a new random variable, the `score' $Y$. The score is the same as the firing distance, $x$, if the player hits the target (which has probability $1-x$), but equals $-1$ if he misses (which has probability $x$). So the rules imply that the player with the highest score (who hits the target at the greatest distance) wins the prize, which we take to be $1$. If there is a tie at a non-negative score the prize is divided equally among the winners. The event that all the marksmen miss, and get the same score of $-1$, has positive probability, and in this case we say that everyone gets the same `consolation prize' of $c\leqslant 1$. If $c=1/n$, which corresponds to the players splitting the prize, the game has constant sum $1$. More specifically, if $n=2$ and $c=1/2$, the game is the original silent duel. Setting $c$ to zero means that no prize is given for failure. A valid strategy is to fire at point-blank range, hoping all the other players have fired and missed. In this case the score is $0$, but this is still better than a miss, so a miss must be assigned a negative score: the choice of $-1$ is arbitrary -- any negative number would do equally well.
\par
A pure strategy for the firing distance (fire at distance $x$) gives rise to a two-atom distribution for the score $Y$ ($Y = x$ with probability $1 - x$, and $Y = -1$ with probability $x$). A mixed strategy for the firing distance gives a mixture of these two-atom distributions for $Y$. Let $F(y)$ be the distribution function for the score. $F$ will have an atom of probability, $p$, at $-1$, but all the remaining probability will lie within $[0, 1]$. So $F(0) \geqslant p$ and $F(1) = 1$.
\par
We note that if there is a density function $g(x)$ for the firing distance, then (aside from the atom at $x=-1$) there is a density $f(y)$ for the score, satisfying
\begin{equation}
\label{eq:fandg}
f(x) =(1 - x) \, g(x) \text{.}
\end{equation}
\begin{theorem}
\label{thm:sdsoln}
Suppose $n \in \mathbb{N}$ with $n \geqslant 2$, and $c \in \mathbb{R}$ with $0 \leqslant c < 1$. Let $1/p$ be the unique solution in $(1, \infty)$ of the polynomial
\begin{align*}
\left( \frac{1}{p} \right) ^{n} &= 1 - nc + n \left( \frac{1}{p} \right) \text{.}
\end{align*}%
Then the $n$-player silent duel game with compensation $c$ has a unique symmetric equilibrium in which each player has overall probability $p$ of missing, with equilibrium payoff $v = p^{n - 1}$, and with score distribution $F(y)$ supported on $\{ -1 \} \cup [0, b]$, where
\begin{equation}
\label{eq:F(y)}
F(y) = p \left( \sqrt[n-1]{\frac{1 - cy}{1 - y}} \right) \text{, for }%
0 \leqslant y \leqslant b = \frac{1 - v}{1 - cv} \text{.}
\end{equation}
\end{theorem}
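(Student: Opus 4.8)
The plan is to compute the expected payoff of a tagged player who fires at distance $x$ while the other $n-1$ players draw their scores independently from a common distribution $F$ with atom $p$ at $-1$, then to impose the equilibrium indifference condition to recover $F$, and finally to fix the remaining free constant $p$ by a normalisation identity. Firing at distance $x$ yields score $x$ with probability $1-x$ and score $-1$ with probability $x$. Since the positive part of $F$ will be shown to be atomless, ties among non-negative scores occur with probability zero, so on hitting at $x$ the player wins the whole prize exactly when all $n-1$ rivals score below $x$, an event of probability $F(x)^{n-1}$; on missing, the player collects the consolation $c$ only when every rival also misses, which has probability $p^{n-1}$. Hence the payoff to firing at $x$ against the common strategy is
\begin{equation*}
u(x) = (1-x)\,F(x)^{n-1} + x\,c\,p^{n-1}.
\end{equation*}

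In a symmetric equilibrium $u$ must be constant, equal to the value $v$, on the support of the firing distribution. Evaluating at $x=0$, where $F(0)=p$, gives $v = p^{n-1}$, the claimed value. Substituting this back and solving the identity $(1-x)F(x)^{n-1} = p^{n-1}(1-cx)$ for $F(x)$ reproduces exactly \eqref{eq:F(y)}, while the requirement $F(b)=1$ at the top of the support yields $b = (1-v)/(1-cv)$ after a line of algebra. It then remains to pin down $p$. The one condition not yet used is that the firing distribution $G$ is a genuine probability measure on $[0,b]$; using \eqref{eq:fandg} in the form $g=f/(1-x)$ with $f=F'$, this reads $\int_0^b F'(x)/(1-x)\,dx=1$. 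The substitution $w=F(x)$, under which $x$ runs over $[0,b]$ as $w$ runs over $[p,1]$, converts $1/(1-x)$ into $((w/p)^{n-1}-c)/(1-c)$ and collapses the integral to the elementary $\int_p^1 w^{n-1}\,dw$ and $\int_p^1 dw$. Carrying this out gives $(1-p^n)/(n\,p^{n-1}) = 1-cp$, which on dividing through by $p^n$ is precisely the stated polynomial $(1/p)^n = 1-nc+n(1/p)$. Writing $t=1/p$ and noting that $\phi(t)=t^n-nt-(1-nc)$ satisfies $\phi(1)=n(c-1)<0$, $\phi(t)\to+\infty$, and $\phi'(t)=n(t^{n-1}-1)>0$ on $(1,\infty)$ establishes the unique root with $t\in(1,\infty)$, hence a unique admissible $p\in(0,1)$ and thereby unique $v$, $b$, and $F$.

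Checking that the constructed object is an equilibrium is then direct: $F$ is increasing with $F(0)=p$ and $F(b)=1$, and for $x>b$ one has $F(x)=1$, so $u(x)=1-x(1-cv)$ is strictly decreasing and equals $v$ at $x=b$, which rules out upward deviations, while $u\equiv v$ on $[0,b]$ by construction. The genuinely delicate part, which I expect to be the main obstacle, is uniqueness, namely showing that every symmetric equilibrium has this shape rather than some other support structure. Here I would argue in the standard war-of-attrition style. An atom of $F$ at any score in $(0,b]$, or at $0$, cannot be a best response, because scoring just above it raises the win probability by a fixed amount at only an infinitesimal cost in hit probability, so placing mass there is dominated by scoring slightly higher.

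Granted that the positive part is atomless, the continuous support can contain no gap: on a gap $F$ is flat, so $u(x)=(1-x)F^{n-1}+xc\,p^{n-1}$ is linear there, and equality $u=v$ at both endpoints would force $u'\equiv 0$, i.e.\ $F^{n-1}=c\,p^{n-1}$, which is impossible since $F\ge p$ and $c<1$. Finally the support must reach down to $0$, for if the continuous part began at some $a>0$ then $F(x)=p$ on $[0,a)$ and firing point-blank would give $u(0)=p^{n-1}$, strictly exceeding the equilibrium value $v=p^{n-1}(1-a(1-c))$, a contradiction. These three structural facts reduce any symmetric equilibrium to the indifference-plus-normalisation system already solved, and so deliver uniqueness.
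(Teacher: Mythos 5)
Your proof is correct and reaches the same equilibrium, but the step that pins down $p$ is genuinely different from the paper's. The paper determines $p$ by a one-line accounting identity: the total expected payout is $nv$, and it equals $1$ unless all $n$ players miss (probability $p^{n}$), in which case it is $nc$; hence $nv = (1-p^{n}) + ncp^{n}$, which with $v = p^{n-1}$ gives the stated polynomial with no integration at all. You instead impose that the firing density $g = F'/(1-x)$ integrates to $1$ on $[0,b]$, evaluating the integral by the substitution $w = F(x)$. This is more computational, but it buys something: the paper, having fixed $p$ by accounting, must still check separately that $g$ integrates to $1$ (it defers this to ``a direct calculation''), whereas in your scheme that normalisation \emph{is} the defining equation for $p$, so it holds by construction. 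The two constraints are equivalent --- since $\int_0^b (1-x)g(x)\,dx = F(b)-F(0) = 1-p$ automatically, requiring $\int_0^b g\,dx = 1$ is the same as requiring the miss probability to equal $p$ --- and indeed both yield $(1/p)^{n} = 1 - nc + n(1/p)$. You are also more explicit than the paper on two points it merely asserts: the monotonicity of $\phi(t) = t^{n} - nt - (1-nc)$ giving the unique root in $(1,\infty)$, and the computation $u(x) = 1 - x(1-cv)$ ruling out deviations above $b$. Your uniqueness argument (no atoms; no gaps, via linearity of $u$ on a gap; support reaching down to $0$) is the same in spirit as the paper's argument that $F$ must hug the bounding curve, just phrased through the payoff function rather than the curve.

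One small omission: atomlessness of the score distribution does not by itself exclude an equilibrium strategy that fires at maximum distance $x = 1$ with positive probability, because such mass lands on the score atom at $-1$, not on $[0,1)$; your normalisation $\int_0^b g\,dx = 1$ silently assumes no such mass exists. The paper handles this case explicitly, noting that firing at maximum distance is dominated. In your framework the fix is equally quick --- deliberately missing yields $cp^{n-1} < p^{n-1} \leqslant v$ since $c < 1$ and $p > 0$, so it is never a best response --- but the sentence should be there to make the uniqueness argument airtight.
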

\begin{proof}
Suppose that there is a symmetric equilibrium in which every player achieves expected value $v$, has the score distribution $F(y)$, and has probability of missing $p$. We note first that $F$ must be continuous on $[0, 1]$, because if there was an atom of probability at some distance $y$ this would imply that that the firing distribution $G$ also had an atom of probability at $y$. But then if one player deviated by moving this atom to $y + \epsilon$, for sufficiently small $\epsilon$, she would obtain a larger expected payoff. This argument does not apply when $y = 1$, but the strategy of firing at maximum distance (when you are certain to miss) is dominated by firing at any shorter range.
\par
If the first $n-1$ players adopt the score distribution $F(y)$, and the $n$'th player chooses to fire at any distance $y \in [0, 1]$, she should get an expected payoff no greater than $v$. As $F$ does not have an atom at $y$, the chance that she ties with another player is zero. If she hits the target (which has probability $1-y)$ then she wins with payoff $1$ if $y$ is greater than the other players' successful firing distances, that is with probability $F^{n-1}(y)$. If she misses, she still gets the consolation prize $c$ if all the other players also miss. So her expected payoff $v$ is made up of these two terms, and so for $0 \leqslant y < 1$ we have
\begin{equation}
\label{eq:cfirst}
(1 - y) \, F^{n-1}(y) + cyp^{n-1} \leqslant v
\text{,}
\end{equation}
giving
\begin{equation}
F(y) \leqslant \sqrt[n - 1]{\frac{v - cyp^{n-1}}{1 - y}}
\text{.}
\label{eq:csecond}
\end{equation}%
\par
By assumption, there must be some $y$-value(s) which give Player $n$ the equilibrium value $v$. In fact, Player $n$ will be prepared to use $y$ values only where the inequality is binding. If the inequality is not binding at some point $y$, there will be an interval including $y$ in which the inequality is not binding. Player $n$ will not be prepared to place any probability in this interval, and since we are assuming we have a symmetric equilibrium, the other players will do the same. So where $F$ is below the bounding curve it must be horizontal. If it touches the curve again at some larger value of $y$, there must be an atom of probability to jump $F$ up to meet the curve again. But we saw that $F$ must be atomless. So the bound must be tight up to some value $b$ at which $F(b) = 1$. Similarly there cannot be an atom of probability at $0$, so $F(0) = p$.
\par
Taking $y = 0$ in the equation for $F$ gives
\begin{equation}
\label{eq:F(0)}
v = F^{n-1}(0) = p^{n - 1}
\end{equation}%
This establishes that if there is a symmetric solution it must be of the form claimed. It remains to find values for $p$ and $b$.
\par
To find $p$, we note that the expected total amount paid out must equal $nv$, but we also know that it is either $1$ if someone hits the target or else $nc$ if everyone misses, so
\begin{align*}
nv &= \left(1 - p^{n}\right)1 +p^{n}(nc) \text{,}
\\
np^{n - 1} &= 1 - p^{n} + ncp^{n} \text{,}
\\
\frac{n}{p} &= \left( \frac{1}{p} \right) ^{n} - 1 + nc \text{,}
\\
\left( \frac{1}{p} \right) ^{n} &= 1 - nc + n \left( \frac{1}{p} \right) \text{.}
\end{align*}%
Considered as an equation in $z = 1/p$, we are looking for the intersection of a line of slope $n$ with the curve $z^{n}$ which has slope $n$ at $1$. Clearly there will be a unique solution greater than $1$ on the positive reals provided $c$ is less than $1$.
\par
For $b$, we solve $F(y) = 1$ to get $b = (1 - v)/(1 - cv)$.
\par
$F(y)$ is differentiable on $[0, b]$ to give a score density function $f(y)$. We can then find the firing distance density as $g(x) = f(x)/(1 - x)$. A direct calculation then shows that this density does integrate to $1$ on $[0, b]$, and that if every player uses this density for the firing distance, none will have an incentive to deviate unilaterally.
\end{proof}
\subsection{Special cases}
\begin{proposition}
For the constant-sum case $c=1/n$ (share the prize if all fail), the
equilibrium has the score distribution
\begin{equation}
\label{eq:z=1/k}
F(y) = \sqrt[n - 1]{ \frac{n - y}{n^{2}(1 - y)} } \text{,} \quad
\text{for } 0 \leqslant y \leqslant b = \frac{n}{n + 1} \text{,}
\end{equation}
and the density for the firing distance is
\begin{equation*}
g(x) = \frac{1}{\sqrt[n - 1]{n^{2} (1 - x)^{2n - 1} (n - x)^{n - 2}}} \text{ .}
\end{equation*}
\end{proposition}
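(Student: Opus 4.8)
The plan is to derive the Proposition directly from Theorem~\ref{thm:sdsoln} by specialising to $c = 1/n$, since all the structural work (existence and uniqueness of the symmetric equilibrium, and the general form of $F$) is already established there. The first step is to evaluate $p$ and $v$. Setting $c = 1/n$ collapses the constant term in the defining polynomial: since $nc = 1$, the equation $(1/p)^n = 1 - nc + n(1/p)$ reduces to $(1/p)^n = n(1/p)$, hence $(1/p)^{n-1} = n$ and $v = p^{n-1} = 1/n$. This is a reassuring consistency check, since in the constant-sum case the total expected payout $nv$ should equal $1$, which it does.

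Next I would substitute $c = 1/n$ (equivalently $v = 1/n$ and $p = \sqrt[n-1]{1/n}$) into the general formulae~\eqref{eq:F(y)}. For the support endpoint, plugging $v = c = 1/n$ into $b = (1-v)/(1-cv)$ and simplifying yields $b = n/(n+1)$ after cancelling a factor of $(n-1)$. For the distribution, the prefactor $p = \sqrt[n-1]{1/n}$ is absorbed into the radical, and writing $1 - cy = (n-y)/n$ turns $F(y) = p\sqrt[n-1]{(1-cy)/(1-y)}$ into $\sqrt[n-1]{(n-y)/(n^2(1-y))}$, exactly as claimed in~\eqref{eq:z=1/k}.

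The remaining task, and the only genuinely computational one, is the firing-distance density. Following the recipe at the end of the theorem's proof, I would differentiate $F$ to obtain the score density $f(y) = F'(y)$ and then apply~\eqref{eq:fandg} in the form $g(x) = f(x)/(1-x)$. The one sub-calculation worth isolating is $\frac{d}{dy}\frac{n-y}{1-y} = \frac{n-1}{(1-y)^2}$, whose numerator $n-1$ conveniently cancels the factor $1/(n-1)$ arising from differentiating the $(n-1)$-th root. What is then left is pure bookkeeping: collecting the fractional powers of $n$, of $(1-x)$, and of $(n-x)$. The main obstacle here, such as it is, is entirely clerical -- keeping the exponents straight so that they combine into the single radical $(n^2(1-x)^{2n-1}(n-x)^{n-2})^{-1/(n-1)}$ of the statement. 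There is no conceptual difficulty, since the Proposition is a direct corollary of the Theorem.
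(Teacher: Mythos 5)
Your proposal is correct and takes essentially the same approach as the paper: specialise Theorem~\ref{thm:sdsoln} to $c=1/n$, get $v=1/n$ and $p=\sqrt[n-1]{1/n}$, substitute into the general formula for $F$ and $b$, and compute $g(x)=F'(x)/(1-x)$; your exponent bookkeeping for $g$ checks out. The only cosmetic difference is that the paper obtains $v=1/n$ in one line from the constant-sum symmetry of the game, whereas you derive it from the defining polynomial (noting the symmetry as a consistency check) --- the two observations are equivalent.
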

\begin{proof}
If $c=1/n$, the game has constant sum $1$, so we have $v=1/n$ by the symmetry of the players. Then $p = 1/\sqrt[n - 1]{n}$, which turns \eqref{eq:F(y)} into \eqref{eq:z=1/k}. $g(x)$ is calculated as $F'(x)/(1 - x)$.
\end{proof}
\par
So
\begin{equation}
g(x) = \frac{1}{4(1 - x)^{3}}
\end{equation}
gives the density of the firing distance $x$ in the classic silent duel ($n = 2$, $c = 1/2$). The solution was published in \cite{karlin1959}. A recent reference is \cite{Owen1995}. \cite{AlpernHoward2017} give an alternative treatment based on Distribution Ranking Games.
%
\par
Figure~\ref{fig:constsum} plots the firing distance densities for the cases $n = 2$ (solid line), $n = 4$ (dashed line), and $n = 6$ (dotted line).
\begin{figure}[H]
\centering
\includegraphics[scale=0.7, clip]{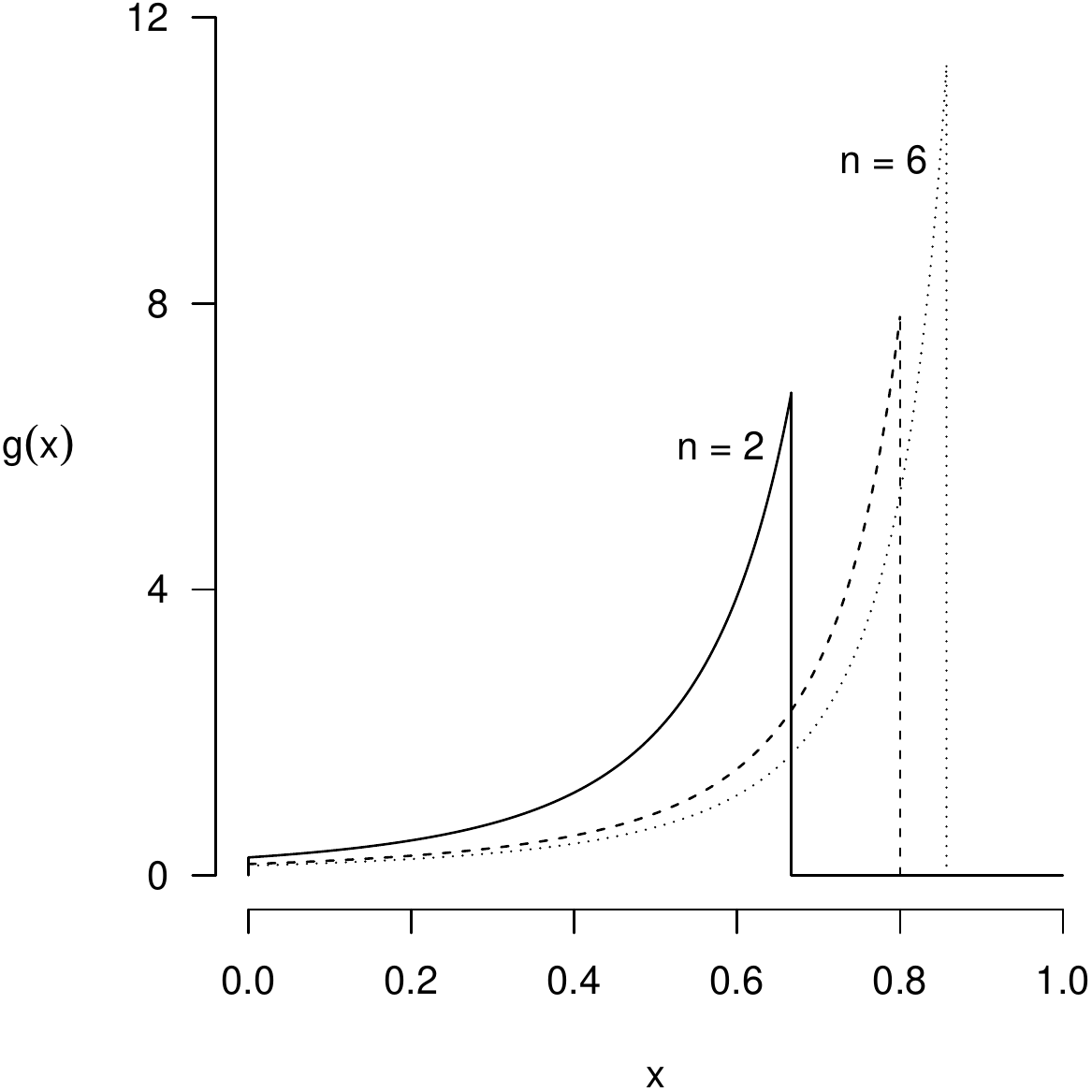}
\caption{Silent duel firing distance densities}
\label{fig:constsum}
\end{figure}
\par
We have thus given a solution to the general constant-sum versions of the silent duel for multiple players. We can also calculate the equilibrium payoff $v$ (and hence the equilibrium distribution) in the case of `no prize for failure'.
\begin{proposition}
For the prize competition case, $c=0$, the equilibrium score distribution $F$ is given by
\begin{equation}
\label{eq:cforz=0}
F(y) = \frac{p}{\sqrt[n-1]{1 - y}} \text{, for }%
0 \leqslant y \leqslant b = 1 - v \text{,}
\end{equation}
where $1/p$ is the unique solution in $(1, \infty)$ of the polynomial
\begin{align*}
\left( \frac{1}{p} \right) ^{n} &= 1 + n \left( \frac{1}{p} \right) \text{.}
\end{align*}%
The payoff $v = p^{n - 1}$.
\end{proposition}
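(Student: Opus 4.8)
The plan is to obtain this proposition as a direct specialization of Theorem~\ref{thm:sdsoln} to the value $c = 0$. Since $c = 0$ satisfies the hypothesis $0 \leqslant c < 1$, the theorem applies verbatim: the $n$-player game with consolation prize $c = 0$ has a unique symmetric equilibrium with probability of missing $p$, payoff $v = p^{n-1}$, and score distribution given by \eqref{eq:F(y)}. So no fresh existence or uniqueness argument is needed; everything is inherited from the theorem, and the work reduces to carrying out the substitution.

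First I would set $c = 0$ in the defining polynomial $(1/p)^{n} = 1 - nc + n(1/p)$, which collapses at once to $(1/p)^{n} = 1 + n(1/p)$, the equation stated in the proposition; the uniqueness of the root $1/p \in (1, \infty)$ is precisely the claim already established in Theorem~\ref{thm:sdsoln} for every $c < 1$. The payoff $v = p^{n-1}$ carries over unchanged, as it does not involve $c$. Next I would substitute $c = 0$ into the score distribution \eqref{eq:F(y)}: the factor $(1 - cy)$ becomes $1$, so $\sqrt[n-1]{(1 - cy)/(1 - y)}$ reduces to $1/\sqrt[n-1]{1 - y}$ and hence $F(y) = p/\sqrt[n-1]{1 - y}$, as claimed. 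Likewise the upper endpoint $b = (1 - v)/(1 - cv)$ has its denominator reduce to $1$, giving $b = 1 - v$.

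I do not expect any genuine obstacle, since the statement is nothing more than a substitution into an already-proved theorem. The only point warranting a moment's care is confirming that the $c < 1$ hypothesis of Theorem~\ref{thm:sdsoln}---which guarantees the unique intersection of the line of slope $n$ with the curve $z^{n}$, and thus a unique admissible $p$---is met at $c = 0$; it plainly is, so the inherited existence and uniqueness remain valid and the proof is complete.
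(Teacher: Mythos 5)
Your proposal is correct and follows essentially the same route as the paper: both obtain the proposition by direct substitution of $c=0$ into Theorem~\ref{thm:sdsoln}, simplifying the polynomial, the distribution $F(y)$, and the endpoint $b$ accordingly. The only difference is that the paper's proof goes on to record the resulting score density $f(y)$ and firing-distance density $g(x)$ explicitly, which are not part of the stated claims, so your omission of them is immaterial.
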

\begin{proof}
Taking $c=0$ in Theorem~\ref{thm:sdsoln}, we easily find that
\begin{equation*}
f(y) = \frac{p}{n-1} \left( \frac{1}{1 - y} \right)^{\frac{n}{n - 1}} \text{, }%
\end{equation*}%
and so
\begin{equation*}
g(x) = \frac{p}{n-1} \left( \frac{1}{1 - x} \right)^{\frac{2n - 1}{n - 1}} \text{.}
\end{equation*}%
\end{proof}
\par
\cite{Sakaguchi1978} gives this equilibrium solution, and \cite{HenigONeill1992} show that it is the unique equilibrium. More recently \cite{PresmanSonin2006} have extended the analysis (making one assumption) to the case of many marksmen who may also have different accuracies.
\par
Figure~\ref{fig:nocomp} plots the firing distance densities for the cases $n = 2$ (solid line), $n = 4$ (dashed line), and $n = 6$ (dotted line).
\begin{figure}[htp]
\centering
\includegraphics[scale=0.7, clip]{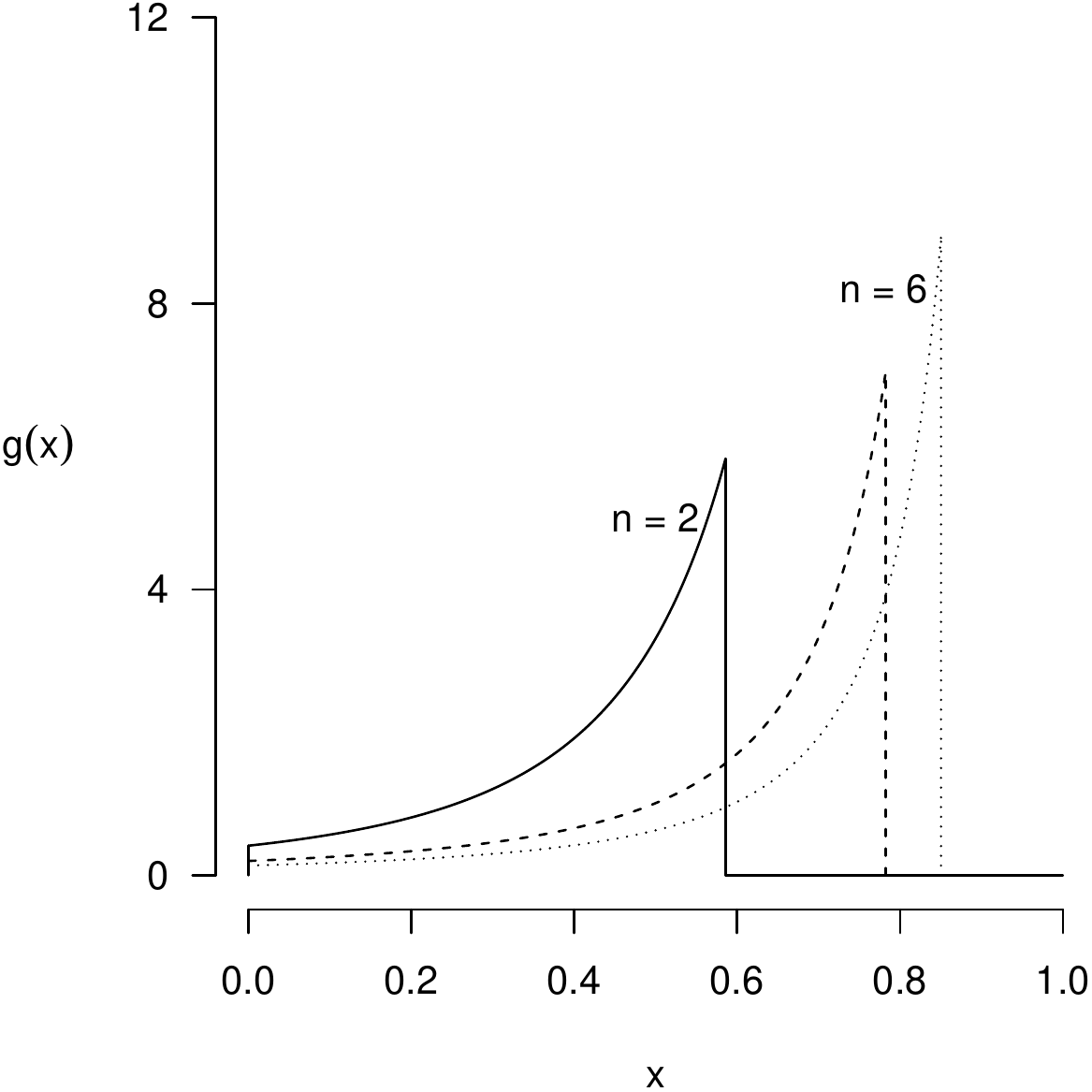}
\caption{Prize competition firing distance densities}
\label{fig:nocomp}
\end{figure}
\section{Conclusion}
\label{sec:conclusion}
We have given a very short solution to the many-player silent duel for with an arbitrary consolation prize. For different values of the consolation prize, our derivation gives or generalizes known results. Ours is the first solution for an arbitrary consolation prize.
%
\bibliographystyle{ecta}            %
\bibliography{DRGames}
%
%

\end{document}